\setlist{topsep=0pt, leftmargin=*}
\newtheorem{theorem}{Theorem}
\newtheorem{example}{Example}
\newtheorem{lemma}{Lemma}
\newtheorem{definition}{Definition}
\title{Game-Theoretic Models of Moral and Other-Regarding Agents (extended abstract)}
\author{Gabriel Istrate
\institute{West University of Timi\c{s}oara}
\email{gabrielistrate@acm.org}
}
\begin{document}
\maketitle

\begin{abstract}
We investigate Kantian equilibria in finite normal form games, a class of non-Nashian, morally motivated courses of action that was recently proposed in the economics literature. We highlight a number of problems with such equilibria, including computational intractability, a high price of miscoordination, and problematic extension to general normal form games. We give such a generalization based on  concept of \emph{program equilibria}, and point out that that a practically relevant generalization may not exist.  To remedy this we propose some general, intuitive, computationally tractable, other-regarding equilibria that are special cases Kantian equilibria, as well as a class of courses of action that interpolates between purely self-regarding and Kantian behavior. 
\end{abstract}

\section{Introduction}

Game Theory is widely regarded as the main conceptual foundation of strategic behavior. The promise behind its explosive development (at the crossroads of Economics and Computer Science) is that of understanding the dynamics of human agents and societies and, equally importantly, of guiding the engineering of artificial agents, ultimately capable of realistic, human-like, courses of action.  Yet, it is clear that the main models of Game Theory, primarily based on the self-interested, rational actor model, and exemplified by the concept of Nash equilibria, are not realistic representations of the richness of human interactions. Concepts such as \emph{bounded rationality} \cite{simon1997models}, and the limitations they impose on the computational complexity of agents' cognitive models \cite{van2019cognition}  can certainly account for some of this difference. But this is hardly the only possible explanation: People behave differently from ideal economic agents not because they would be irrational \cite{ariely2008predictably}, but since many human interactions are cooperative, rather than competitive \cite{tomasello2009we}, guided by social norms such as \emph{reciprocity, fairness} and \emph{inequity-aversion} \cite{bowles2013cooperative},  often involving \emph{networked minds}, rather than utility maximization performed in isolation  \cite{gintis2016individuality}, driven by moral considerations  \cite{tomasello2016natural} or by other  not purely self-regarding behaviors, e.g. \emph{altruism} \cite{hoefer2013altruism} and \emph{spite} \cite{chen2008altruism,chen2016auction}.

Moral considerations (should) interact substantially with game theory: indeed, the latter field has been used to propose a  reconstruction of moral philosophy \cite{binmore1994game,binmore1994game2,binmore2005natural}; conversely, some philosophers have gone as far as to claim that we need a \emph{moral equilibrium theory} \cite{talbott1998we}. Whether that's true or not,  it is a fact that \emph{homo economicus}, the Nash optimizer of economics, is increasingly complemented by a rich emerging typology of human behavior \cite{gintis2014typology}, that also contains (in Gintis's words) "\textit{homo socialis},  the other-regarding agent who cares about fairness, reciprocity, and the well-being of others, and \textit{homo moralis} \footnote{since our agents are not necessarily human, we will use alternate names such as "moral agent" for this type of behavior.}  ... the Aristotelian bearer of nonconsequentialist character virtues".\footnote{Gintis proposes a taxonomy of behavior with three distinct types of preferences: \emph{self-regarding}, \emph{other regarding} and \emph{universalist}; a further relevant  distinction is between so-called \emph{private} and \emph{public personas}, that leads to further types of behavior such as \textit{homo Parochialis}, \textit{homo Universalis} and \textit{homo Vertus}. See \cite{gintis2014typology} for further details.} 
These claims are well-documented experimentally: for instance, 
Fischbacher et al. \cite{fischbacher2001people} investigated the percent of people having self-regarding preferences in a public goods game, showing that it is in the range of 30-40\%, while the remaining were either other-regarding  or moral agents.  Since artificial agents (will) interact with humans, such concerns are highly relevant to the design of multiagent systems and justify the study of alternative other-regarding notions, e.g. Rong and Halpern's \cite{halpern2010cooperative,rong2013towards} "cooperative equilibria" or \emph{dependency theory} \cite{sichman2002multi,grossi2012dependence}. 
 Other-regarding considerations could be encoded (e.g. \cite{fehr1999theory}) as \emph{externalities} into agents' perceived utilities, that may lead them away from straightforward maximization of their material payoffs.  However, keeping them explicit may be important for agent implementations. 

The purpose of this paper is to contribute to the emerging literature on non-Nashian,  morally inspired game theoretic concepts and, equally important, to bring its concerns and methods to the attention of the various communities represented in TARK. 
We are inspired by what we believe is one of the most intriguing classes of equilibrium concepts that can be seen as morally grounded: \emph{Kantian (a.k.a. Hofstadter) equilibria} \cite{roemer2019we}.  This notion emerged from three separate lines of research converging on an identical mathematical definition, but justifying it, however, from several very different perspectives: \emph{superrationality} \cite{superrational,fourny2020perfect}, \emph{team reasoning} \cite{bacharach1999interactive}, and \emph{Kantian optimization}, respectively \cite{roemer2019we}. 

The common framework (most crisply developed for symmetric coordination games) only considers as relevant the action profiles where all agents choose \emph{the same action}, choosing the action $x$ that, if played by everyone, maximizes agents' (identical) utility functions. The justification of this restriction depends on the perspective: \emph{superrationality} assumes that if rationality constrains an agent to choose a specific course of action $x$, then the same reasoning compels \textbf{all} agents (at least in the case of symmetric games, when all agents are positionally indistinguishable from the original agent) to also choose $x$.\footnote{To cite Hofstadter: "If reasoning dictates an answer, then everyone should independently come to that answer. Seeing this fact is itself the critical step in the reasoning toward the correct answer [...]". Though superrationality does away with the assumption of counterfactual independence of Nash equilibria, it is otherwise compatible with a particular version of homo economicus that requires some very strong assumptions on agent rationality (see \cite{fourny2020perfect} for a discussion).}
In contrast, \emph{Kantian optimization} justifies the limitation to symmetric profiles in a very different manner: Roemer \cite{roemer2010kantian} suggested that agents often ignore the potential for action of the other players, acting instead according to the \emph{Kantian categorical imperative}  \cite{sedgwick2008kant} "act only according to that maxim whereby you can, at the same time, will that it should become a universal law", that is, choose a course of action that, if adopted by every agent, would bring all agents the highest payoff.  \footnote{As recognized by Roemer himself and discussed e.g. in \cite{braham2020kantian}, the connection of Kantian equilibria to actual Kantian ideas is quite loose. Another possible interpretation is that Kantian equilibria embody \textit{rule utilitarianism} \cite{harsanyi1977rule}. Finally, see \cite{sher2020normative} for a discussion of the normative aspects of Kantian equilibria. }  One way to formalize this idea, employed e.g. in \cite{alger2013homo}, is to decouple the \textit{material payoffs} agents receive from their (perceived) utility, which agents maximize in order to select the action. Specifically, assume the given agent $i$ plays strategy $x$ against action profile $y$. We assume that the material payoff the agent receives is $\pi_{i}(x,y)$. On the other hand the utility the agent uses to evaluate alternative $x$ may not be equal to $\pi_{i}(x,y)$ and may in fact, have in fact nothing to do with $y$ at all! Instead, $
u_{i}(x,\textbf{y})=\pi_{i}(x,\overline{x}_{-i}),$ 
where $\overline{x}_{-i}$ is the action profile where all agents other than $i$ play  $x$ as well. That is, the agent evaluates the desirability of action $x$ in isolation from the actions of the other players, as if  choosing $x$ could somehow "magically" determine the other players to adopt the same strategy. \footnote{Frank \cite{frank2010price} refers to this as \emph{voodoo causation}. Elster \cite{elster2017seeing} argues that Kantian optimization seems to be rooted in a form of \emph{magical thinking}, "causing agents to act on the belief (or act as if they believed) that they can have a causal influence on outcomes that are effectively outside their control". We take a descriptive, rather than normative position: such reasoning is something people  \textbf{simply do}; understanding its implications is strategically valuable.} Alternatively,  $\pi(x,\overline{x}_{-i})$ measures the extent to which action $x$ is "the morally best course of action". \textbf{Such a justification is cognitively plausible:} experiments have shown  \cite{levine2020logic} that people often employ such "universalization" arguments when judging the morality of a given behavior. 

The questions we attempt to start answering in this paper are: 
\begin{itemize}
\item[1.] \textit{Can we extend the definition of Kantian equilibria to cover all natural cases of "Kantian behavior"?}

However, we are \textbf{not} simply looking in our generalization for yet another equilibrium notion of primarily mathematical interest, but \underline{for one satisfying specific tractability requirements that ensure} \\ \underline{easy implementation in computational agents.} Specifically, a  target concept should at least be: 
\begin{itemize} 
\item[ I.] \textbf{expressive}, i.e. indicative of realistic behavior of human agents in sufficiently typical situations 
\item[ II.] \textbf{cognitively plausible}: the equilibrium should \textbf{not} be justifiable in terms of expensive epistemic assumptions (the way common knowledge of rationality can be used to justify Nash equilibria \cite{aumann1995epistemic}); 
\item[ III.] \textbf{logically tractable}: proposed equilibria should be easy to specify formally, in a way that translates to efficient implementations. 
\item[ IV.] \textbf{computationally tractable}: equilibria should be easy to compute \cite{van2019cognition}, since bounded rational agents are assumed to compute (and play) them. 
\end{itemize} 
\textbf{The main message of the paper is that such an extension is possible, but any general notion of Kantian equilibria may be of theoretical interest only:} while we give an interesting extension for certain symmetric games (Sec.~\ref{sec:gen}) inspired by the concept of \emph{program equilibria}, it's not clear how to further extend it. Together with intractability (Thm.~\ref{nphard}) this suggests that \textbf{a general, practically relevant, notion of Kantian equilibrium might not exist.} 

\item[2.] \textit{What is the relation between Kantian equilibria and Bacharach's (informally defined) team-reasoning equilibria \cite{bacharach1999interactive}?} The answer is that Kantian equilibria are a proper subset of team-reasoning equilibria. 

\item[3.] \textit{Given that the answer to Q1 is negative, are there more specialized equilibria related to Kantian optimization that satisfy (I)-(IV)?} We will show that there exist, indeed, several more restrictive equilibrium notions, satisfying tractability and plausibility constraints, and relate them to Kantian equilibria. 
 
\item[4.] \textit{Real people are seldom purely selfish or purely Kantian. (How) can we formalize this?} We give such a definition, and motivate it through the case of Prisoners' Dilemma. 
\end{itemize}

The outline of the paper is as follows: In Section~\ref{sec:prelim} we review some basic notions. In Section~\ref{seq:kantian} we obtain some further results on (and highlight some limitations of) Kantian equilibria: first of all, we point out that finding a mixed Kantian equilibrium is computationally intractable even for two-player symmetric  games (Theorem~\ref{nphard}). Second, multiple Kantian equilibria may exist, and lack of coordination on the same equilibrium may be detrimental to players, even with all of them playing a common linear combination of Kantian actions. In Section~\ref{sec:gen} we discuss the problem of extending Kantian equilibria to non-symmetric games. giving a proposal based on the concept of program equilibria. As such, our proposal inherits the problems of this concept. Given these problems, in Section~\ref{sec:efficient} we propose several other-regarding equilibria.\footnote{Generally, ethical egoism and its variant, rational egoism, are not accepted as a basis of moral behavior; counterexamples exist,  \cite{rand1964virtue}; however, it's fair to say that such positions are controversial, and somewhat marginal. In contrast, moral and other-regarding behaviors are better aligned, with other-regarding behavior often a consequence of moral play.} We show (Theorem~\ref{thm:efficient}) that these equilibria can be computed efficiently, that they are indeed Kantian equilbria (according to our generalized definition), and that they yield Kantian equilibria for symmetric coordination games. Finally, in Section~\ref{sec:bounded} we relax the assumption that the agents are other-regarding: we assume that agents have a degree of greed, zero for Kantian agents, infinite for Nashian agents. We show (Theorem~\ref{pdgreed}) how our definition applies to Prisoners' Dilemma. 

For reasons of space, \textbf{most proof details are deferred to a longer version of the paper, available on arXiv \cite{istrate-kantian-tr}}. So have we done, for reasons of abundance of technical details, with some of the results: e.g. the ones the proper definition and characterization of Kantian program equilibria (Theorems~9,~10 in \cite{istrate-kantian-tr}), which also clarify the connection between Kantian and  \emph{team reasoning}.  

\section{Related Work} 

The literature on other-regarding game-theoretic models is quite large, and a short section like this one cannot do justice to all the related, relevant work. Instead we have chosen to highlight a modicum of references directly relevant to our work. 

The major impetus for this work was \emph{Kantian optimization}. It was developed in \cite{roemer2010kantian,roemer2015kantian}, developing early ideas of Laffont \cite{laffont1975macroeconomic}.  The current status of the theory is consolidated in the recent book \cite{roemer2019we}. A recent special issue of the \emph{Erasmus Journal of Economics} is devoted to discussing and situating Roemer's contribution. Particularly valuable articles in this collection include \cite{braham2020kantian,sher2020normative}. 

The other strand of ideas relevant to our work concerns the concept of \emph{program equilibria}, defined in \cite{tennenholtz2004program} and further investigated in \cite{fortnow2009program,kalai2010commitment,van2013program,lavictoire2014program,barasz2014robust,critch2019parametric,oesterheld2019robust}. There are several other related (and relevant) models, such as the \emph{translucent player} model of \cite{capraro2015translucent,halpern2018game}, or \emph{mediated equilibria} \cite{monderer2009strong}.

The two other paradigms leading to the same concept for two-player symmetric coordination games, \emph{superrationality} and (especially) \emph{team reasoning} are, of course, relevant to our approach. Superrationality is rather different, though, and we only reiterate recommendations of \cite{superrational,fourny2020perfect}. The main reference for team reasoning is still \cite{bacharach2006beyond}. We also recommend papers \cite{sugden2003logic,colman2018team,gold2020team}. 

Notions of symmetry in games have been insufficiently investigated, and they play an important role in defining Kantian programs. We refer to \cite{ham2013notions,tohme2019structural} for such studies. 

Finally, an impressive amount of work on behaviorally relevant game-theoretic notions related to moral behavior is summarized in \cite{dhami2016foundations}. While it is by no means comprehensive (especially with respect to the computer science literature), it is an excellent starting point.

\section{Preliminaries} 
\label{sec:prelim}

We assume knowledge of basic results of game theory at the level of a textbook such as, e.g. \cite{osborne:rubinstein:book}, in particular with concepts such as normal form games, best response strategy, and mixed (Nash) equilibria. All the games $G$ we consider are normal form and, unless mentioned otherwise, have identical action sets $Act_{G}$ for all players. Given a finite set $S$, we will define $\Delta(S)$ to be the set of probability distributions on $s$. Elements of $\Delta(S)$ are functions $c:S\rightarrow [0,1]$ satisfying $\sum\limits_{i\in S} c(i)=1$. $\Delta^{n}:=\Delta(\{1,2,\ldots, n\})$ is, geometrically, a $(n-1)$-dimensional simplex. When $G$ is a normal-form game and $k$ a player in the game we will denote by $\Delta_{G}^{k}$ the set of mixed actions available to player $k$, identified with some simplex $\Delta^{n}$ with a suitable dimension. We will occasionally drop $k$ from the notation and simply write $\Delta_G$ instead when the player is clear from the context, or when all agents have the same action set.  Given vectors $x=(x_{1},\ldots, x_{n})$ and $y=(y_{1},\ldots, y_{n})$, we say that \emph{$x$ dominates $y$} iff $x_{i}\geq y_{i}$ for all $i=1,\ldots, n$. The domination is \emph{strict} if at least one inequality is. When comparing (mixed) action profiles \textbf{a} and \textbf{b}, the domination relation may apply to the vectors of agent utilities $(u_{1}(\textbf{a}), \ldots,  u_{n}(\textbf{a}))$ and $(u_{1}(\textbf{b}), \ldots,  u_{n}(\textbf{b}))$, respectively. Action profiles that are strictly dominated may be assumed not to occur in game play. 

A game with identical action sets is \emph{diagonal} if every pure action profile is Pareto dominated by some profile on the diagonal, the set of action profiles where all players play the same action. A particular class of diagonal games are \emph{coordination games}, where all player utilities are zero outside the diagonal. Such a game is \emph{symmetric} if, additionally, agent utilities are identical for all action profiles on the diagonal. 


\begin{definition} 
 Let $G$ be a game with common action set $A$. A \emph{variation function} is a function $\phi:\Xi \times A\rightarrow A$, for some set of parameters $\Xi$.\footnote{The precise form of this  definition follows \cite{sher2020normative}, and is motivated by Roemer's definition of \emph{additive/multiplicative Kantian equilibria}, with action set  $A=\mathbb{R}_{+}$ and variation functions $\phi(r,a)=a+r$, $a\cdot r$, respectively. \textbf{We will mostly be concerned with variation functions of the type "change (everyone's) current action to  b"} (for $b\in A$). Formally, $\Xi=A$ and $\phi(b,a)=b$.} 
 A \emph{Kantian (Hofstadter) equilibrium} is a pure strategy profile $x^{opt}=(x^{opt}_{1},\ldots, x^{opt}_{n})$ that maximizes the material payoff 
of each agent, should everyone deviate similarly. Formally, for every agent $i$ and $r\in \Xi$, 
\[
V_i(x^{opt}_{1}, \ldots, x^{opt}_{n})\geq V_{i}(\phi(r,x^{opt}_{1}), \ldots, \phi(r,x^{opt}_{n})).
\]

\label{def:basic}
\end{definition} 

\begin{example} 
One of the original applications of Kantian equilibria was Prisoners' Dilemma (PD, Fig.~\ref{fig:pd} (a)). Kantian equilibria provide an elegant solution to the paradox: Kantian agents  coordinate on action profile (C,C), as jointly doing so gives them a higher payoff than the Nash equilibrium (D,D). 

\begin{figure} 
\begin{center} 
\begin{minipage}{.40\textwidth}
\begin{game}{2}{2}[Player 1][Player 2]
	    &  C      &  D     \\
	 C  &  $2, 2$ & $0, 3$  \\
	 D  &  $3, 0$ & $1, 1$\\
\end{game}
\end{minipage} 
\begin{minipage}{.40\textwidth}
\begin{game}{2}{2}[Player 1][Player 2]
	    &  B     &  S     \\
	 B  &  $2, 3$ & $0, 0$  \\
	 S  &  $1, 1$ & $3, 2$\\
\end{game}
\end{minipage} 
\end{center} 
\caption{a. Prisoners' Dilemma. b. BoS game as modified by Roemer.} 
\label{fig:pd} 
\end{figure} 

\end{example} 

Kantian equilibria are easiest to justify for symmetric diagonal games, since in this case they  dominate all other action profiles, thus can be properly seen as "best course of action for all".  There are symmetric (nondiagonal) games, though, where no pure strategy Kantian equilibrium is adequate, and which seem to compel us to considering mixed-strategy Kantian equilibria. An example is Hofstadter's "Platonia's Dilemma" \cite{superrational}, a special case of the \emph{market entry games} of Selten and G\"{u}th \cite{selten1982equilibrium}: 

\begin{definition} 
In Platonia Dilemma $n$ agents (say, $n=20$) are offered to win a prize. Agents may choose to send their name to a referee. An agent wins the prize if and only if it is \textbf{the only one
submitting their name}: if zero or at least two agents send their names then noone wins anything. 
\end{definition}

It is easy to see that both pure strategies, sending/not sending their name, are equally bad if adopted by all agents: they get zero payoff. A better option is to allow independent randomization: 

\begin{definition} 
Given a game $G$ with identical actions, a  \textbf{mixed Kantian agent} will choose a mixed strategy $X^{OPT}\in \Delta_G$ that maximizes its expected utility, should everyone play $X$. For two-player symmetric games with game matrix $A$ and variation function $\phi(b,a)=b$ we have  
$X^{OPT}= argmax\{y^{T}Ay: y\in \Delta_{G}\}$. 
\end{definition}

\begin{lemma} In Platonia Dilemma the probabilistic strategy where each agent independently submits their name with probability $p$
brings an expected profit to every agent equal to $p(1-p)^{n-1}$. This quantity is maximized for $p=\frac{1}{n}$. Thus the strategy with $p=\frac{1}{n}$ is a mixed Kantian equilibrium. 
\label{one}
\end{lemma}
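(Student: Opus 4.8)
The plan is to reduce the statement to an elementary one-variable optimization. First I would compute the expected payoff of a fixed agent $i$ under the product strategy in which every agent submits independently with probability $p$. Agent $i$ wins the prize exactly when $i$ submits and each of the other $n-1$ agents does not; by independence these events have probabilities $p$ and $(1-p)^{n-1}$ respectively, so $i$'s probability of winning --- and hence, normalizing the prize to $1$, its expected profit --- equals $p(1-p)^{n-1}$. The same computation holds verbatim for every agent by symmetry, which establishes the first assertion.

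Next I would maximize $f(p) = p(1-p)^{n-1}$ over the interval $[0,1]$. Differentiating and factoring gives $f'(p) = (1-p)^{n-2}\bigl(1 - np\bigr)$, whose only zero in the open interval $(0,1)$ is $p = 1/n$; since $f' > 0$ for $p < 1/n$ and $f' < 0$ for $p > 1/n$, this critical point is the unique global maximizer on $(0,1)$, and it strictly beats the boundary values $f(0) = f(1) = 0$. Hence $p = 1/n$ is the optimal common submission probability, and the maximal expected profit is $\tfrac{1}{n}\bigl(1-\tfrac{1}{n}\bigr)^{n-1}$.

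Finally I would observe that in Platonia's Dilemma each agent has only two pure actions, so the mixed-action simplex $\Delta_G$ is one-dimensional and is fully parametrized by the submission probability $p \in [0,1]$. Thus ranging over all mixed strategies that every agent could jointly adopt is exactly ranging over $p \in [0,1]$, and the previous step identifies $p = 1/n$ as the strategy maximizing an agent's expected utility when all agents play it. By the definition of a mixed Kantian agent, this strategy is therefore a mixed Kantian equilibrium.

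There is no substantial obstacle here: the only points requiring care are the use of independence to factor the winning probability as $p(1-p)^{n-1}$, the verification that the interior critical point is a genuine global maximum rather than merely a stationary point, and the remark that for a two-action game the Kantian optimization over common mixed profiles collapses to this single scalar problem --- so that no asymmetric or otherwise richer common mixed strategy is available to do better.
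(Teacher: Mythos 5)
Your proposal is correct and follows essentially the same route as the paper: the paper's proof consists exactly of the derivative computation $f'(p)=(1-p)^{n-2}(1-np)$ and the monotonicity conclusion, while you additionally spell out the (routine) independence argument for the payoff formula and the observation that the two-action simplex reduces the Kantian optimization to a scalar problem. These added details are sound and only make explicit what the paper leaves implicit.
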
 

\begin{proof} 
Let $f(p)=p(1-p)^{n-1}$. $f^{\prime}(p)= (1-p)^{n-1}- (n-1)p(1-p)^{n-2}= (1-p)^{n-2} (1-np)$, so $f$ is increasing on $[0,1/n]$ and decreasing on $[1/n,1]$. 
\end{proof} 

\section{Limitations of Mixed Kantian Equilibria}
\label{seq:kantian}

In this section we  note some properties of mixed Kantian equilibria. They are mostly negative: finding a mixed Kantian equilibrium is intractable. Also, such equilibria may be vulnerable to miscoordination. 

\subsection{The Computational Intractability of Mixed Kantian Equilibria}

We make an easy observation concerning the computational complexity of mixed Kantian equilibria in symmetric two player games.  To our knowledge this has not been discussed before. Note: such equilibria are guaranteed to exist, since the $(n-1)$-dimensional simplex of mixed strategies  is a compact set and the common utility function is continuous. First of all, finding a mixed Kantian equilibrium is easy in symmetric coordination games, as all such equilibria coincide with pure Kantian equilibria. 

\begin{theorem} Consider a finite symmetric coordination game. Then mixed Kantian equilibria coincide with pure Kantian equilibria. Hence one can compute mixed Kantian equilibria in polynomial time.
\label{th1}   
\end{theorem}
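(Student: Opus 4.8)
The plan is to reduce the whole problem to a one-variable-per-action optimization over the simplex. Fix a finite symmetric coordination game $G$ with $n \geq 2$ players and common action set $A = \{1, \ldots, m\}$, and let $d_k$ denote the common payoff received by all players at the diagonal profile where everyone plays $k$; recall that payoffs at off-diagonal profiles are $0$ by definition of a coordination game. First I would observe that if all players independently play the same mixed strategy $y = (y_1, \ldots, y_m) \in \Delta_G$, then only diagonal pure profiles have positive probability, the profile ``everyone plays $k$'' occurring with probability $y_k^n$; hence the common expected payoff is $g(y) := \sum_{k=1}^{m} y_k^n d_k$ (for $n = 2$ this is exactly $y^{T} A y$ with $A = \mathrm{diag}(d_1, \ldots, d_m)$, matching the definition in the excerpt). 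Thus a mixed Kantian equilibrium is precisely a maximizer of $g$ over $\Delta_G$. On the pure side, a uniform deviation from the profile $e_k$ (everyone plays $k$) to $e_b$ yields everyone $d_b$, so $e_k$ is a pure Kantian equilibrium iff $d_k \geq d_b$ for all $b$; and once some $d_j > 0$, no off-diagonal pure profile (which gives everyone $0$) can be Kantian. Hence the pure Kantian equilibria are exactly the vertices $e_k$ with $d_k = d^{*} := \max_j d_j$.

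Next I would prove that $\arg\max_{y \in \Delta_G} g(y) = \{\, e_k : d_k = d^{*} \,\}$, which together with the previous paragraph gives the claimed coincidence. Since $G$ is a coordination game and therefore (by the paper's definition) a diagonal game, some diagonal profile Pareto-dominates the all-zero off-diagonal profiles, forcing $d^{*} > 0$. For any $y \in \Delta_G$ we have $y_k^n \geq 0$ and $d_k \leq d^{*}$, so $g(y) \leq d^{*} \sum_k y_k^n$; and since $y_k^n \leq y_k$ for $y_k \in [0,1]$ and $n \geq 2$, we get $\sum_k y_k^n \leq \sum_k y_k = 1$. Hence $g(y) \leq d^{*} = g(e_{k^{*}})$ for any $k^{*}$ attaining $d^{*}$, so the maximum of $g$ equals $d^{*}$ and is attained at every such $e_{k^{*}}$. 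Conversely, equality $g(y) = d^{*}$ forces equality in the first estimate, i.e. $y_k = 0$ whenever $d_k < d^{*}$, and equality in $\sum_k y_k^n = \sum_k y_k$, which (termwise, using $n \geq 2$) forces $y_k^n = y_k$, i.e. $y_k \in \{0,1\}$, for every $k$; combining, $y = e_k$ for some $k$ with $d_k = d^{*}$.

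Polynomial-time computability is then immediate: read off the $m \leq |G|$ diagonal payoffs $d_1, \ldots, d_m$, compute $d^{*} = \max_k d_k$, and output any (or every) diagonal profile achieving it, in linear time in the size of the game description.

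The only delicate point is the degenerate case $d^{*} \leq 0$: there $g$ attains its maximum (value $d^{*}$) also at non-vertex points supported on the actions with $d_k = d^{*}$ — for instance a uniform mixture when all diagonal payoffs are equal and nonpositive — so the coincidence of mixed and pure Kantian equilibria can genuinely fail. I would dispatch this by invoking that a coordination game is by definition diagonal, which forces $d^{*} > 0$ as above; alternatively one observes that such games are strategically trivial (every profile is weakly Pareto-optimal) and may simply be excluded. I expect this case analysis, rather than the elementary inequality $\sum_k y_k^n \leq 1$, to be the part most worth stating with care.
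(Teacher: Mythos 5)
Your argument is correct and is surely the intended one; note that the extended abstract gives no proof of Theorem~1 at all (it is deferred to the arXiv version), so there is nothing in the paper itself to compare against. The reduction to maximizing $g(y)=\sum_k y_k^n d_k$ over the simplex, the bound $g(y)\le d^{*}\sum_k y_k^n\le d^{*}$, and the termwise equality analysis forcing $y$ to be a vertex supported on an action attaining $d^{*}$ are all sound. The one genuine quibble is your dispatch of the degenerate case: the paper defines domination \emph{weakly} ($x$ dominates $y$ iff $x_i\ge y_i$ for all $i$), so the diagonality requirement only forces $d^{*}\ge 0$, not $d^{*}>0$ as you claim. When $d^{*}=0$ and at least two actions attain it, every mixture supported on those actions is a mixed Kantian equilibrium, so the set-coincidence in the theorem genuinely fails and the statement needs either a nondegeneracy hypothesis or the weaker reading that every mixed Kantian equilibrium yields the same payoff as a pure one; your fallback of simply excluding these strategically trivial games is the right instinct, but the claim that the paper's definitions already exclude them is not justified. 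The polynomial-time conclusion is unaffected in either case.
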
 

Platonia Dilemma with $n=2$ shows that Theorem~\ref{th1} does not extend to general symmetric games. This is no coincidence:  in this case finding (or just detecting) the optimal mixed strategy is intractable: 

\begin{theorem} The following problem, called MIXED KANTIAN EQUILIBRIUM, is NP-hard: 
\begin{itemize}
\item[] INPUT: A two-player symmetric  game $G$, and an aspiration level $r\in \mathbb{Q}$. 
\item[] TO DECIDE: Is there a mixed strategy profile $x=(x_1,\ldots, x_N)$ such that the utility of every player under common mixed action $x_{1}a_1+x_{2}a_2+\ldots + x_{m}a_{m}$ is at least $r$? 
\end{itemize}
\label{nphard} 
\end{theorem}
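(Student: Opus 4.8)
The plan is to reduce from the NP-complete CLIQUE problem, using the Motzkin--Straus theorem as the crucial bridge between quadratic optimization over the simplex and the clique number of a graph. First I would recast the decision problem in clean analytic form: for a two-player symmetric game with $m\times m$ symmetric payoff matrix $A$, when both players play a common mixed action $y\in\Delta_G=\Delta^m$ each receives expected utility $y^{T}Ay$ (exactly as in the definition of a mixed Kantian agent, where $X^{OPT}=\arg\max\{y^{T}Ay : y\in\Delta_G\}$). Hence MIXED KANTIAN EQUILIBRIUM is precisely the problem: given symmetric $A\in\mathbb{Q}^{m\times m}$ and $r\in\mathbb{Q}$, decide whether $\max_{y\in\Delta^m} y^{T}Ay \ge r$.

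Next comes the reduction itself. Given an instance $(H,k)$ of CLIQUE --- a graph $H$ on $m$ vertices and an integer $k\le m$ --- let $A$ be the adjacency matrix of $H$ (so $A_{ij}=1$ when $\{i,j\}$ is an edge of $H$ and $0$ otherwise, with zero diagonal), interpreted as the payoff matrix of a symmetric two-player game $G$ with $m$ actions, and set the aspiration level $r:=1-\tfrac1k$. This map is clearly computable in polynomial time, and $r$ has polynomially many bits. For correctness I would invoke the Motzkin--Straus identity $\max_{y\in\Delta^m} y^{T}Ay = 1-\tfrac{1}{\omega(H)}$, where $\omega(H)$ is the size of a maximum clique of $H$; the easy inequality ``$\ge$'' follows by placing the uniform distribution on the vertices of a maximum clique, and the matching upper bound is the content of the theorem. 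Therefore a mixed profile under which every player earns at least $r$ exists iff $1-\tfrac{1}{\omega(H)}\ge 1-\tfrac1k$, i.e.\ iff $\omega(H)\ge k$; so $(H,k)$ is a yes-instance of CLIQUE iff $(G,r)$ is a yes-instance of MIXED KANTIAN EQUILIBRIUM, and NP-hardness follows from NP-completeness of CLIQUE.

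The only non-routine ingredient is the Motzkin--Straus identity, which I would cite rather than reprove; everything else is bookkeeping, the main points to be careful about being the zero-diagonal convention for $A$ and the bit-size of $r$. I would close with two remarks: running the same construction on the complement graph reduces INDEPENDENT SET instead, and the hardness already holds for games with $0/1$ payoffs; moreover the argument shows that the optimization version --- actually producing a mixed Kantian equilibrium, equivalently computing $\max_{y\in\Delta_G} y^{T}Ay$ --- is NP-hard as well, which is the form most relevant to the paper's ``computational tractability'' desideratum.
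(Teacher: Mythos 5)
Your proposal is correct and follows essentially the same route as the paper: a reduction from CLIQUE in which the payoff matrix is the adjacency matrix of the graph, the aspiration level is $\frac{k-1}{k}$, and correctness rests on the Motzkin--Straus identity relating $\max_{y\in\Delta^m} y^{T}Ay$ to the clique number. Your added remarks (zero diagonal, bit-size of $r$, the $0/1$-payoff restriction) are accurate refinements of the same argument.
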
 

\begin{proof} 

We point out to the existence of a reduction from CLIQUE to MIXED KANTIAN EQUILIBRIUM, that shows that the latter problem is NP-hard. In fact the reduction will only consider symmetric games with 0/1 payoffs. 

Consider, indeed, a graph $g$. Let $k$ be an integer and $(g,k)$ be the corresponding instance of CLIQUE. 

Define the symmetric two-player game $G$ whose payoff matrix is the adjacency matrix $A$ of $g$. 

Mixed Kantian equilibria $x=(x_1,\ldots, x_N)$ of $G$ correspond to optimal solutions of the following quadratic program: 
\begin{equation}
\label{program-mixed-kant}  
\left\{\begin{array}{c}
max(x^{T}Ax)\\
x_1+\ldots + x_N =1\\
x_{1},\ldots, x_N \geq 0. 
\end{array} 
\right. 
\end{equation} 

This is a problem that has been called \cite{bomze1998standard} \textit{the standard quadratic optimization problem}, and has been investigated substantially in the global optimization literature (see e.g. \cite{bomze1997evolution}).  A beautiful result due to Motzkin and Straus \cite{motzkin1965maxima} can be restated as claiming that for programs whose matrix $A$ is the adjacency matrix of a graph $g$, if $o$ is the optimum of problem~(\ref{program-mixed-kant}) then $\frac{1}{1-o}$ is the size of the maximum clique in $g$.  

Hence $(g,k)\in CLIQUE$ if and only if $(G,\frac{k-1}{k})\in$ MIXED-KANTIAN-EQUILIBRIUM.

\end{proof}

\subsection{Multiple Equilibria and miscoordination}

Optimal diagonal action profiles may fail to be unique. If the agents are not communicating (and no implicit coordination mechanisms are acting, e.g., one of the action profiles being a \emph{focal point}, such as in the Hi-Lo game from \cite{bacharach2006beyond}), agents may reach a suboptimal action profile due to their lack of coordination on the same optimal action: Consider, indeed,  the game in Figure~\ref{multiple}. $(C,C)$ and $(E,E)$ are equally good pure (and mixed) Kantian equilibria. But if one player plays $C$ and the other plays $E$ the resulting outcomes are the worst possible for both of them, being dominated by every single possible strategy profile! Randomizing among Kantian actions might not help either: miscoordination impacts even "Kantian" scenarios, where players, lacking a salient equilibrium to coordinate on, play a joint mixed strategy formed of Kantian actions.\footnote{Such a scenario is, of course, not justifiable from a usual rational choice perspective. But \textbf{it is justifiable in a Kantian setting where every player believes that choosing a pure action $a$ will immediately make all other players do the same}: a player may use the Kantian imperative to restrict itself to pure Kantian equilibria, then use the assumption  to justify playing a convex combination of pure Kantian equilibria it is indifferent between.} We quantify the degradation in performance as follows: 

\begin{figure} 

\begin{center} 
\begin{minipage}{.25\textwidth}
\begin{game}{3}{3}[Pl1][Pl 2]
	    &  C      &  D   & E   \\
	 C  &  $5, 5$ & $3, 6$  & $1,2$\\
	 D  &  $6, 3$ & $4, 4$ & $6,3$\\
	 E  &  $2, 1$ & $3, 6$ & $5,5$\\
\end{game}
\end{minipage} 
\begin{minipage}{.20\textwidth}  
\begin{game} {2}{2}[][Pl2]
	    &  C      &  D     \\
	 C  &  $10, 1$ & $0, 0$  \\
	 D  &  $0, 0$ & $4, 2$\\
\end{game}
\end{minipage} 
\begin{minipage}{.20\textwidth}
\begin{game}{2}{2}[][Pl2]
	    &  B     &  S     \\
	 B  &  $6, 1$ & $0, 0$  \\
	 S  &  $0, 0$ & $3, 2$\\
\end{game}
\end{minipage} 
\begin{minipage}{.25\textwidth} 
\begin{game}{2}{2}[][Pl2]
	    &  C    &  S    \\
	 C  &  $10, 10$ & $100, 200$  \\
	 S  &  $200, 100$ & $6, 6$\\
\end{game}
\end{minipage}

\end{center} 
\caption{(a). A game with multiple Kantian equilibria. (b). Modified BoS (Example~\ref{ex}). (c). Modified BoS (Example~\ref{bos2}). (d). An anti-coordination game. } 
\label{multiple}
\end{figure}


\begin{definition} For a symmetric game $G$ with strictly positive payoffs let $NC$ be the set of mixed action profiles composed of Kantian actions only. The \emph{price of miscoordination} of  $G$ is the ratio $
p(G)=\sup\limits_{a\in NC} \frac{u_{i}(X^{OPT})}{u_{i}(a)}$.  Because of symmetry this does not depend on the particular choice of player $i$. 
\end{definition}

The following result shows that the price of miscoordination can be arbitrarily large: 

\begin{theorem} Let $G$ be a symmetric diagonal game with $k\geq 2$ players and $r\geq 1$ pure Kantian actions. Then the price of miscoordination of  $G$ is in the range $[1, r^{k-1}]$. Both bounds are tight and can be reached in settings where players choose a Kantian action uniformly at random. 
\label{miscoord} 
\end{theorem}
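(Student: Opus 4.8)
The plan is to express everything in terms of one number $V^{\ast}$, the payoff common to all agents on any pure Kantian profile $(a^{(l)},\dots,a^{(l)})$ --- it is well defined because the game is symmetric and $a^{(1)},\dots,a^{(r)}$ are exactly the $r$ pure Kantian actions. I would first extract the only consequence of diagonality that is needed: every pure profile $\mathbf b$ is weakly Pareto dominated by some diagonal profile, and the best diagonal profiles are the Kantian ones, so $u_i(\mathbf b)\le V^{\ast}$ for every agent $i$ and every $\mathbf b$. Averaging over realizations, $u_i(X,\dots,X)\le V^{\ast}$ for every common mixed strategy $X$, with equality at $X=\delta_{a^{(1)}}$; hence $u_i(X^{OPT},\dots,X^{OPT})=V^{\ast}$, so the numerator in the definition of $p(G)$ is simply $V^{\ast}$, the same for every $i$. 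Throughout I read $NC$ as the set of profiles in which all agents independently play one common mixed strategy $y$ supported on $\{a^{(1)},\dots,a^{(r)}\}$ (the ``joint mixed strategy formed of Kantian actions'' of the preceding discussion); this is the reading under which the statement is correct.

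For the lower bound I would note that every $a\in NC$ has $u_i(a)\le V^{\ast}=u_i(X^{OPT})$ by the paragraph above, so each ratio in the supremum is at least $1$ and $p(G)\ge 1$. For tightness, the $k$-player game on $r$ actions all of whose entries equal $(V^{\ast},\dots,V^{\ast})$ is symmetric, diagonal, has exactly $r$ Kantian actions, and assigns value $V^{\ast}$ to every $a\in NC$, so $p(G)=1$; and when $r=1$ one also has $r^{k-1}=1$, so the interval collapses consistently.

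For the upper bound, fix $a\in NC$ and let $y_l$ be the common probability placed on $a^{(l)}$, with $\sum_{l=1}^{r}y_l=1$. Expanding the expected payoff of agent $i$ over the realized pure profiles --- all of them $k$-tuples of Kantian actions --- and retaining only the $r$ terms in which every agent realizes the same $a^{(l)}$, while discarding the rest (legitimate because payoffs are strictly positive, so the discarded terms are nonnegative), gives $u_i(a)\ge V^{\ast}\sum_{l=1}^{r}y_l^{k}$. By convexity of $t\mapsto t^{k}$ (equivalently, the power--mean inequality) $\sum_{l=1}^{r}y_l^{k}\ge r\cdot(1/r)^{k}=r^{1-k}$, hence $u_i(a)\ge V^{\ast}r^{1-k}$ and $u_i(X^{OPT})/u_i(a)\le r^{k-1}$; taking the supremum over $NC$ yields $p(G)\le r^{k-1}$. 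Both inequalities become equalities precisely when $y$ is uniform on the Kantian actions and the off-diagonal Kantian payoffs vanish, so the bound is approached by diagonal games whose off-diagonal entries tend to $0$ under uniform randomization (and equals $r^{k-1}$ exactly for the idealized $r$-Kantian-action coordination game); hence it cannot be improved.

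The two inequalities carrying the upper bound --- discarding nonnegative terms and the power--mean bound --- are routine. The step I expect to be the real obstacle is the first one: using diagonality to pin $u_i(X^{OPT})$ down to $V^{\ast}$ so that the numerator is a single number shared by all agents, together with being explicit about the reading of $NC$. If agents were allowed to pick independent, possibly distinct, mixtures, $p(G)$ could exceed $r^{k-1}$ by an arbitrary amount already for two-player, two-Kantian-action diagonal games with strictly positive payoffs, so it is exactly the restriction to symmetric profiles --- all agents sharing one Kantian-supported mixture --- that makes the theorem true.
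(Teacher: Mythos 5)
Your proof is correct and follows essentially the same route as the paper's: normalize the common Kantian payoff (your $V^{\ast}$, the paper's rescaling to $1$), bound $u_i(a)$ below by discarding the nonnegative off-diagonal terms and applying the power-mean/Jensen inequality to get $\sum_l y_l^k \ge r^{1-k}$, bound it above via Pareto domination by diagonal profiles, and exhibit the all-constant game and the pure coordination game for tightness. Your explicit remark that $NC$ must be read as common (symmetric) mixtures over Kantian actions --- since otherwise the ratio is unbounded --- is a useful clarification of the paper's definition, but not a different argument.
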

The merit of this simple result is to point out that the definition of generalized Kantian equilibria needs to include scenarios where randomness is correlated, as in  \emph{correlated equilibria}  (see e.g. \cite{shoham2009multiagent}). 
\begin{proof} 
The price of miscoordination is insensitive to dividing all utilities by the same factor $\lambda$, so w.l.o.g. one may assume that the utilities agent receive on pure Kantian equilibrium profiles is 1. For the mixed action \textbf{a} where players play the $r$ Kantian actions (w.l.o.g. $1,2,\ldots, r$) with probabilities $p_{1},p_{2},\ldots p_{r}$ (which add up to 1), its expected utility is $E[u_{i}(\textbf{a})]=\sum u_i(i_{1},i_{2},\ldots i_{k})\cdot p_{i_1}p_{i_2}\ldots p_{i_k}\geq \sum\limits_{i=1}^{r} p_{i}^{k}\geq r\cdot \frac{1}{r}^{k}=\frac{1}{r^{k-1}},$  by Jensen's inequality. The upper bound is obtained when off-diagonal action profiles formed of Kantian actions only have utilities equal to 0.  As for the lower bound, for diagonal games by domination we have $u_{i}(i_{1},i_{2},\ldots i_{k})\leq 1$, so $E[u_{i}(\textbf{a})]=\sum u_{i}(i_{1},i_{2},\ldots i_{k})\cdot p_{i_1}p_{i_2}\ldots p_{i_k}\leq \sum p_{i_1}p_{i_2}\ldots p_{i_k}= (p_{1}+p_{2}+\ldots p_{r})^{k}=1.$ A game realizing the lower bound is the one where agent utilities on all pure action profiles are equal to 1. 
\end{proof} 

\section{Kantian Program Equilibria in (Pareto) Symmetric Games}
\label{sec:gen} 
Definition~\ref{def:basic} of Kantian equilibria makes the most sense in symmetric coordination games, but does not capture all the intuitive cases of Kantian behavior.  
Indeed, let us consider the BoS game, as modified by Roemer (Fig.~\ref{fig:pd} (b)).\footnote{Roemer (\cite{roemer2019we}, Proposition 2.3) argues that (S,B) is a simple Kantian equilibrium. His argument is, however, ad-hoc, based on making this profile "diagonal" by flipping the order of B and S for the second player, and the conclusion that $(B,S)$ is Kantian is, we feel, unintuitive, since $(B,B),(S,S)$ strictly dominate it. Our protocol plays $\frac{1}{2}(B,B)+\frac{1}{2}(S,S)$, different from (and better than) what Roemer calls the mixed Kantian equilibrium, where row player plays $\frac{3}{8}B+\frac{5}{8}S$ and the column player plays $\frac{3}{8}S+\frac{5}{8}B$.} Intuitively, agents would perhaps agree that the following \textbf{protocol} could be called Kantian, in that it is symmetric and both players benefit if they both follow it: flip a fair coin; if it comes out \emph{heads}, they (both) play $B$, else both play $S$. 
As described, the  protocol requires the centralized choice of a random bit, but it could easily be implemented in a distributed manner by making each of the two agents flip a (fair) coin and taking their XOR. The implementation of the protocol (Algorithm~\ref{alg2}) assumes that each agent is parameterized by an \emph{agent ID} $i\in \{1,2\}$ (not needed in this particular case) and vector $(myb,otherb)$ of random bit choices, one for each player, and shared between players. 

An even more dramatic case is that of the game from Figure~\ref{multiple} (d),  where the best outcomes are not symmetric. In these cases it even seems  irrational for the agents to play symmetric action profiles, since these action profiles are dominated by all the other action profiles!  Rather, it is plausible that agents would agree that they need to \textbf{anti}coordinate, but they have different preferences for the joint action profile to coordinate upon. A "best for all" solution would jointly play a random anti-coordinated profile $\frac{1}{2}(C,S)+\frac{1}{2}(S,C)$. As in the previous example, this course of action can be implemented by the two agents in a distributed manner, by jointly playing according to the protocol in Algorithm~\ref{alg2.2}.  In this example, in addition to the extra bit $otherb$ communicated by the other player, the protocol of each agent \textbf{makes explicit use of the agents' own $id$, $i\in \{1, 2\}$.}  

\begin{minipage}{.45\textwidth}
\begin{pseudocode}\\
{BoS}
BOS(i::ID,myb::BIT,otherb::BIT)\\
\\
\mbox{Randomly choose a bit }myb\in \{0,1\}\\
\mbox{communicate }mybit\mbox{ to the } \\
\mbox{other player as its }otherb.\\
\mbox{\textbf{if }}[myb\oplus otherb == 0]\\
 \hspace{5mm}\mbox{ \textbf{then} play }B\\
\hspace{5mm}\mbox{ \textbf{else} play }S\\
\label{alg2}
\end{pseudocode}
\end{minipage} 
\begin{minipage}{.50\textwidth}
\begin{pseudocode}\\
{Anticoord}
Anticoord(i::ID,myb::BIT,otherb::BIT)\\
\\
\mbox{Randomly choose a bit }myb \in \{0,1\}\\
\mbox{communicate }myb\mbox{ to the }\\
\mbox{other player as its }otherb.\\
\mbox{\textbf{if} }[myb\oplus otherb \equiv i \mbox{ (mod 2)}] \\
\hspace{5mm}\mbox{ \textbf{then} play }C\\
\hspace{5mm}\mbox{ \textbf{else} play }S\\
\label{alg2.2}
\end{pseudocode}
\end{minipage} 

\vspace{2mm} 
\vspace{-2mm}


The intuitive conclusion of these two examples is simple: Definition~\ref{def:basic} is not sufficient. Some simple games may have coordinated \textbf{protocols} that could properly be called "Kantian". In this section \textbf{we give a somewhat more general  definition\footnote{other plausible alternatives are discussed (and ruled out) in Section~13 of the longer version \cite{istrate-kantian-tr}. Of special interest is the relation between Kantian equilibria and \emph{team-reasoning equilibria}.}  of Kantian equilibria, but not for general games, only for a class of "symmetric" games}. There are multiple definitions of game symmetry in the literature \cite{ham2013notions,tohme2019structural}; the most important one requires that for every player $i$, action profile $(x_1,x_2,\ldots, x_n)$ and permutation $\sigma \in S_{n}$, we have $u_{\sigma(i)}(x_1,x_2,\ldots, x_n)= u_{i}(x_{\sigma(1)},x_{\sigma(2)},\ldots, x_{\sigma(n)})$. We we use a slightly less demanding definition (we call our version \emph{Pareto symmetry}, see Definition~16 in the longer version~\cite{istrate-kantian-tr}) to capture some asymmetric games like Romer's version of BoS. This game is asymmetric, but in an inconsequential way: all the asymmetries concern dominated profiles. 
Our extension is inspired by the concept of  \textbf{program equilibria} \cite{howard1988cooperation,tennenholtz2004program}. To define these equilibria we associate to every finite normal-form game an extended game whose actions correspond to \emph{programs}. Agents' programs have access to own and others'  sources\footnote{this aspect was important for Nashian optimization. It will be less so for us, since deviations are not present in the definition of Kantian equilibria. On the other hand, a Kantian agent playing program $P$ can make sure it is not taken advantage upon by the other players, either alone,  as in \cite{tennenholtz2004program}, by reading other players' programs and only playing $P$ when all do, or with the help of a \emph{mediator}, which implements on behalf of all players the following protocol: if all agents follow $P$ then the mediator will simulate $P$ on behalf of the  agent; otherwise it will play in a Nashian way. },  and can act on them. A program equilibrium is a Nash equilibrium in the extended game. We extend this idea to Kantian equilibria: 
\begin{definition} 
Given a (Pareto symmetric) game $G$ with identical actions sets for all players, a \emph{Kantian program equilibrium in $G$} is a probability distribution $p$ on the action profiles of $G$ such that: (a). $p$ has its support on the set of action profiles that are \emph{Pareto optimal}, that is not Pareto dominated by any other action profile. (b). $p$ is implemented by \textbf{agents playing  a common program $P$} in the extended game. (c). there exists no probability distribution $q$ with properties a. and b. such that the vector of expected utilities $(E[u_{i}(q)])$ Pareto dominates the vector $(E[u_{i}(p)])$. 
\label{simpe}
\end{definition} 
The  assumption that \emph{players have the same action sets} is motivated by the "common program" requirement of Def.~\ref{simpe} (b).
Point (a). encodes a simple rationality condition. Points (b). and (c). embody a generalized version of the Kantian categorical imperative: (b). encodes the constraint of identical behavior, (c). encodes the fact that implementing $p$ is "a best action" for all players.

\textbf{We only need to formalize what we mean by \textit{program} in this definition.} The semantics is inspired by the one in \cite{tennenholtz2004program}, but the full formalization is somewhat subtle. We defer a full presentation to Section~15 of the longer version \cite{istrate-kantian-tr}. A couple of technical points are, however, worth stating:  
\begin{itemize}
\item[-] The semantics of programs in \cite{tennenholtz2004program} does not allow for any synchronization between different versions of the same program, other than testing whether they are syntactically equivalent. Since (as recognized in Theorem~\ref{miscoord}) we need to include correlated randomness, we need to extend the semantics of programs from \cite{tennenholtz2004program}, where this is not possible. There are many ways to do it, but one way is to allow \emph{correlated sampling from distributions}: all the programs get an identical sample from a given distribution. 
\item[-] \textbf{However, simply adding correlated sampling of \emph{action profiles} to the semantics of programs from \cite{tennenholtz2004program} leads (see Theorem~9 in the long version \cite{istrate-kantian-tr})  to paradoxical results:} every convex combination of Pareto optimal strategy profiles would be a Kantian program equilibrium. This is an issue: in Prisoners' Dilemma profiles $(C,D),(D,C)$ could perhaps be justified from a "team reasoning" perspective where one player "sacrifices itself" so that the other one walks free. To accept them as "Kantian equilibria" seems, however, problematic (see also the discussion in sections~13.3 and~14 of \cite{istrate-kantian-tr})
\item[-] If, on the other hand, \textbf{agent programs didn't communicate at all, used no private randomness, or used no specific ID/payoff information} then they would run identically for all agents, coordinating on the same  action (excluding, thus, scenarios like that of Example~\ref{ex}, that we want to model). 
\item[-] We will take a middle-ground approach, and assume that agents can use their ID and the information about the game payoffs in a very limited way, that makes the program act "identically with respect to a group of symmetries acting transitively on the set of agents". This requires us to restrict ourselves to the class of Pareto symmetric games of Definition~16 in the longer version \cite{istrate-kantian-tr}, whose set of Pareto dominant action profiles has such symmetries. The precise technical details are spelled out in Section 15 of the longer version \cite{istrate-kantian-tr}, where we prove (Theorem~10) a characterization of Kantian equilibria for Pareto symmetric games which also shows that Kantian program equilibria are a strict subset of the class of team-reasoning equilibria.

\end{itemize}

Algorithms 4.1 and 4.2 lend some credibility to the intuition that Kantian equilibria are somehow related to some "symmetric" notion of correlated equilibria. This intuition is correct: in Definition~19 in  the longer version \cite{istrate-kantian-tr} we define a notion of "correlated symmetric equilibrium". We then prove:  

\begin{theorem} 
Correlated symmetric equilibria of symmetric games are Kantian program equilibria. 
\label{kant-corr}
\end{theorem}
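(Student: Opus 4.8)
The plan is to verify, for a correlated symmetric equilibrium $p$ of a symmetric game $G$, the three clauses of Definition~\ref{simpe} in turn, with essentially all of the work concentrated in clause (b). Recall that a correlated symmetric equilibrium (Definition~19 in the long version \cite{istrate-kantian-tr}) is a distribution on action profiles of $G$ that is invariant under the game's symmetry group $\Gamma$ — which for a \emph{symmetric} game may be taken to be the full symmetric group $S_n$ acting by permuting players, hence acting transitively on agents — that is supported on Pareto-optimal profiles, and that is maximal, in the Pareto order on the induced vectors of expected utilities, among all $\Gamma$-symmetric distributions on Pareto-optimal profiles. Clauses (a) and (c) will then be almost immediate once we have the right auxiliary lemma, and clause (b) is the real theorem.

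First I would isolate the auxiliary lemma: a distribution $q$ on action profiles is implementable by a common program in the extended game (in the sense of Section~15 of \cite{istrate-kantian-tr}) \emph{if and only if} it is $\Gamma$-symmetric. The ``only if'' direction is the one needed for clauses (a)/(c): since all agents run the same code, using ID and payoff information only in a way that commutes with the $\Gamma$-action, the induced joint distribution is forced to be $\Gamma$-invariant. Granting this, clause (a) (support on Pareto-optimal profiles) is just the support condition in the definition of correlated symmetric equilibrium read verbatim, and clause (c) follows by contradiction: any $q$ satisfying (a) and (b) is $\Gamma$-symmetric and Pareto-optimally supported, so if $(E[u_i(q)])$ Pareto-dominated $(E[u_i(p)])$ this would violate the maximality of $p$. (We do not need the ``play $P$ only if all others do'' safeguard of \cite{tennenholtz2004program} here, since, as noted before Definition~\ref{simpe}, deviations play no role in the Kantian notion.)

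The heart of the argument is the ``if'' direction, which is clause (b): every $\Gamma$-symmetric, Pareto-optimally supported distribution — in particular $p$ — is realized by a single common program $P$. The naive construction, correlated-sampling a profile directly from $p$ and having agent $i$ output its $i$-th coordinate, is exactly the one ruled out by the discussion preceding the theorem, since it would make every convex combination of Pareto-optimal profiles a Kantian program equilibrium. Instead I would exploit the symmetry structurally. Choose a set of orbit representatives for the $\Gamma$-action on profiles and let $\hat p$ be the induced distribution on representatives. The program $P$ then (i) uses correlated sampling to draw, from shared randomness, an orbit representative $a$ from $\hat p$ and, independently, a group element $\gamma$ from the uniform (or appropriate coset-indexed) distribution on $\Gamma$, arranged so that $\gamma\cdot a$ has law $p$; and (ii) each agent outputs the coordinate of $\gamma\cdot a$ indexed by its own ID. Since the sampled pair $(a,\gamma)$ is shared and the ID is used only in this last indexing step, all agents literally execute the same code; and because $\Gamma$ acts transitively, $\gamma$ ``spreads'' $\hat p$ over full orbits, so the induced joint law is $\sum_{a}\hat p(a)\,\mathrm{Unif}_\Gamma(\gamma)\,\delta_{\gamma\cdot a}=p$ by $\Gamma$-invariance of $p$. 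It remains to check that $P$ is legal in the formal program semantics: that ``correlated sampling from a distribution'' is precisely the primitive that semantics adds to \cite{tennenholtz2004program}, and that the ID/payoff usage above falls within the permitted ``acts identically with respect to a group acting transitively on the agents'' discipline.

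The step I expect to be the main obstacle is making this construction fully rigorous against the subtle program semantics: confirming that sampling the symmetrizing element $\gamma$ is an allowed move (it is a distribution the program writes down explicitly), that indexing into $\gamma\cdot a$ by ID is the only ID-dependent operation and is $\Gamma$-equivariant in the required sense, and that the joint law is reconstructed as exactly $p$ rather than some coarsening of it — this last point is where transitivity of the $\Gamma$-action, not merely symmetry of $p$, is genuinely used, and where the argument would break for merely diagonal but non-symmetric games. For the stated theorem (symmetric games) taking $\Gamma=S_n$ already suffices; one can remark that the same construction adapts, mutatis mutandis, to any transitive $\Gamma$, which is what is needed for the Pareto-symmetric generalization of Section~\ref{sec:gen}.
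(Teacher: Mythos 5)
First, a caveat: the extended abstract you were asked to match does not actually contain a proof of Theorem~\ref{kant-corr}; both the definition of correlated symmetric equilibrium (Definition~19) and the program semantics it relies on live only in the longer version \cite{istrate-kantian-tr}, so your argument can only be judged against Definition~\ref{simpe} and the informal description of the semantics given here. Within those constraints, your overall plan is reasonable, and your construction for clause (b) --- draw an orbit representative from the induced distribution on orbits, draw a symmetrizing group element from shared randomness, and have each agent output the coordinate of the resulting profile indexed by its own ID --- is exactly the kind of program the paper's semantics is designed to admit. The reconstruction $\sum_{a}\hat p(a)\cdot\mathrm{Unif}(\Gamma\cdot a)=p$ does follow from $\Gamma$-invariance of $p$ (invariance makes $p$ uniform on each orbit), and using the ID only in the final, $\Gamma$-equivariant indexing step is what separates this from the ``paradoxical'' naive correlated sampling of profiles that the paper explicitly rules out.

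The genuine gap is in clause (c), which you reduce to the unproven ``only if'' half of your auxiliary lemma: that every distribution implementable by a common program is invariant under the \emph{full} symmetry group of the game. The paper's own description of the semantics only requires a program to act ``identically with respect to a group of symmetries acting transitively on the set of agents'' --- \emph{some} transitive group, not necessarily $S_n$. For $n\ge 3$ the symmetric group has proper transitive subgroups (e.g.\ the cyclic group), so a common program could induce a distribution invariant under such a subgroup but not under $S_n$; if the maximality in Definition~19 is quantified only over $S_n$-invariant distributions (as you assume), such a $q$ could Pareto-dominate $p$ without contradicting that maximality, and clause (c) would fail. Closing this requires either showing that the semantics forces full $S_n$-invariance, or that the maximality in the definition of correlated symmetric equilibrium ranges over all program-implementable distributions --- neither of which can be extracted from the extended abstract, and neither of which your proposal establishes. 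A secondary, smaller issue: you claim transitivity of the $\Gamma$-action is ``genuinely used'' in reconstructing $p$; in fact the reconstruction only needs $\Gamma$-invariance of $p$ (constancy on orbits), while transitivity on agents is what the semantics needs for the program to count as ``common'' --- these two roles should be kept separate.
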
 

Kantian program equilibria allow players to obtain a better expected payoff in Platonia Dilemma: 
\begin{theorem} 
Algorithm~\ref{alg3} implements a Kantian program equilibrium for Platonia Dilemma.
\label{kantian-pd}
\end{theorem}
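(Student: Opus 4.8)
The plan is to exhibit the probability distribution $p$ on action profiles that Algorithm~\ref{alg3} induces, and then verify the three clauses of Definition~\ref{simpe} directly. Whatever the precise syntactic form of Algorithm~\ref{alg3}, its effect is that the single common program, using correlated sampling (a uniformly random permutation $\sigma\in S_n$, or equivalently a uniform index assembled from the agents' broadcast random bits), singles out exactly one agent --- say the one with $\sigma(i)=1$ --- who submits their name while every other agent stays silent. Hence $p$ is the uniform distribution over the $n$ pure profiles in which exactly one agent submits, and, normalising the prize to $1$, the induced expected--utility vector is $(E[u_1(p)],\ldots,E[u_n(p)])=(\tfrac1n,\ldots,\tfrac1n)$. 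As a side remark this already beats the mixed Kantian equilibrium of Lemma~\ref{one}, since $\tfrac1n>\tfrac1n(1-\tfrac1n)^{n-1}$, which matches the claim made just before the theorem.

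First I would check clause (a). A pure profile of Platonia's Dilemma is Pareto optimal iff exactly one agent submits: any profile with zero or at least two submitters yields the all--zero utility vector, which is strictly dominated by every single--submitter profile, whereas two distinct single--submitter profiles are Pareto incomparable. Thus $\mathrm{supp}(p)$ lies in the Pareto--optimal set. Clause (b) holds by construction: $p$ arises from all agents executing one common program $P$ in the extended game, and $P$ stays within the permitted use of correlated sampling and agent IDs, because the target distribution is invariant under the transitive action of $S_n$ on the agents --- exactly the ``identical behaviour up to a transitive symmetry group'' regime discussed above --- so the paradoxical collapse flagged for unrestricted correlated sampling of action profiles does not arise here.

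The crux is clause (c), and here the key observation is elementary: for every action profile $\mathbf{a}$ of Platonia's Dilemma one has $\sum_i u_i(\mathbf{a})\le 1$, since at most one agent can be the unique submitter. Consequently $\sum_i E[u_i(q)]\le 1$ for every distribution $q$ on action profiles, in particular for every $q$ satisfying (a) and (b). If such a $q$ Pareto dominated $(\tfrac1n,\ldots,\tfrac1n)$, then (domination with at least one strict inequality forces) $\sum_i E[u_i(q)]>\sum_i E[u_i(p)]=1$, a contradiction. Hence no dominating $q$ exists, all three clauses hold, and $p$ --- i.e.\ Algorithm~\ref{alg3} --- is a Kantian program equilibrium.

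I expect the only real friction to be bookkeeping around Algorithm~\ref{alg3} and the deferred program semantics: one must confirm that the distributed bit--exchange version of the algorithm genuinely produces a uniform choice of submitter, and that this particular use of IDs and correlated randomness is licensed by the restricted semantics of Section~15 of \cite{istrate-kantian-tr} (so that $p$ is actually implementable as a common program). Once that is granted, clauses (a)--(c) follow immediately from the two structural facts above --- Pareto--optimal profiles are exactly the single--submitter profiles, and per--profile payoffs sum to at most $1$ --- so the remainder of the argument is short.
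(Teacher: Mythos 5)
Your argument is correct and follows essentially the same route as the paper: the paper likewise dismisses clauses (a) and (b) as clear and settles clause (c) by observing that the per-realization utilities sum to exactly $1$, so no admissible distribution can Pareto dominate the vector $(\tfrac1n,\ldots,\tfrac1n)$. Your additional detail on why single-submitter profiles are exactly the Pareto optimal ones, and on the implementability of the uniform choice of submitter, is consistent with (and merely elaborates on) the paper's proof.
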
 
\begin{proof} 
Points (a). and (b). from the definition of Kantian program equilibria are clear, the only one that merits a discussion is point (c). 

The expected utility of each player under Algorithm~\ref{alg3} is equal to $1/n$. Since the sum of utilities of all players under a particular set of random choices is equal to 1, no vector of expected utilities can strictly dominate the vector $(1/n,1/n, \ldots, 1/n)$ of expected utilities for the Algorithm. 
\end{proof} 


\begin{pseudocode}{Choose-Winner}{i,b_1,b_2,\ldots, b_{n}}
\mbox{Randomly choose an integer }b_{i}\in \mathbb{Z}_{n}\\ 
\mbox{\textbf{if} }[\sum_{j=1}^{n} b_{j}\equiv i\mbox{ (mod n)}]\\
\hspace{5mm}  \mbox{ \textbf{then} S(UBMIT)} \\
\hspace{5mm}\mbox{ \textbf{else} D(ON'T)}\\
\label{alg3}
\end{pseudocode}


\section{Some computationally efficient other-regarding equilibria}
\label{sec:efficient} 

As defined in the previous section, Kantian program equilibria for games with identical action sets inherit some of the definitional problems of "ordinary" program equilibria. Among them: 
\begin{itemize}
\item[-] \emph{fragility}: (Kantian) program equilibria are sensitive (see e.g. \cite{oesterheld2019robust}) to the precise specification of programs: do we insist that all agent programs are syntactically identical, or just "do the same thing"? See \cite{lavictoire2014program,van2013program} for some attempted solutions for program equilibria that could be adapted to our setting.   
\item[-] \emph{lack of generality}: Definition~\ref{simpe} it is only applicable to (some of the) games with identical action sets. To further generalize it to all finite normal-form games one would need to specify what it means for two agents to "take the same course of action" in settings with differing action sets. 

\item[-] \emph{lack of predictive power}: There may be multiple (even infinitely many) Kantian program equilibria. 

\end{itemize}

Given these objections, and with constraints (I)-(IV) in mind, we propose in the sequel a substantially more modest approach: Rather than seeking a general definition of Kantian equilibria we propose instead  \emph{several} other-regarding equilibria. \textbf{They all correspond intuitively to real-life situations, are tractable, can be justified by team reasoning and are related, for symmetric coordination games, to Kantian equilibria.}  One was independently suggested in \cite{kordonis2016model}, the other ones are first introduced here: 

\begin{definition} 
A \emph{Rawlsian equilibrium} is a probability distribution over Pareto optimal profiles maximizing the 
\emph{egalitarian social welfare} (the expected utility of the worst-off player) and is strictly dominated by no other profile with this property.  Such equilibria implement the idea of justice as fairness \cite{rawls2001justice}.
\end{definition} 
\begin{example} 
We modify the BoS example as in Fig.~\ref{multiple} (c): perhaps 1 is a classical music lover, that gets a higher utility than the other player by going, together with its partner, to any of the two concerts. Then (S,S) is the (unique) Rawlsian equilibrium. Choosing such an equilibrium is an example of altruistic behavior from player 1, since it maximizes the payoff of its non-music-lover partner.  
\label{bos2} 
\end{example} 
\begin{definition} 
A \emph{Bentham-Hars\'anyi equilibrium} is a probability  distribution on Pareto optimal profiles maximizing the sum of expected payoffs.  See \cite{harsanyi1955cardinal} for a philosophical motivation. 
A \emph{best-off equilibrium} is a prob. distrib. on Pareto optimal profiles maximizing the largest expected payoff, and strictly dominated by no profile with this property. E.g., in Exp.~\ref{bos2} (B,B) is the unique Bentham-Hars\'anyi/best-off equilibrium. 
\end{definition}

Although a best-off equilibrium may not seem "fair", there exist real-life "team reasoning" situations that elicit behavior suggestive of such an equilibrium: one such example is, for instance, scenarios where members of a team "sacrifice" for one of their members (e.g. parents for a child).  

The equilibrium notions we introduced so far implicitly assumed that player utility is given by material payoffs. Sometimes the frustration a player feels is derived by counterfactually comparing its realized payoff with all possible ones. There are many implementations of this idea. The following notion quantifies the extent to which a given profile is worse for the given player than a random profile. 

\begin{definition} 
The \emph{percentile index of profile $a$ for player $i$} is the percentage of Pareto optimal profiles that would get  $i$ a strictly better payoff than $a$. 
A \emph{Rawlsian percentile equilibrium} is a profile minimizing the largest expected percentile index of all players, and strictly dominated by no profile with this property. 
\end{definition} 

\begin{example} Consider the game shown in Figure~\ref{multiple} (b). 
Then percentile indices of Pareto optimal profiles are $(0,100)$ for $(C,C)$, and $(100,0)$ for $(D,D)$, respectively. Profile $\frac{1}{2}(C,C)+\frac{1}{2}(D,D)$ is a Rawlsian percentile  equilibrium. Player 1 gets average utility 7 while player 2 gets average utility $\frac{3}{2}$. 
\label{ex}
\end{example} 

\noindent An even less cognitively sophisticated model of agent frustration relies on classifying outcomes as "happy/not happy". The following is a simple example of such a notion: 

\begin{definition} The \emph{natural expectation point of player $i$} is the median (over all undominated pure strategy profiles) payoff. If there are two medians then the average value is taken. 
A player is \emph{happy in a pure strategy profile $a$} iff its payoff is larger or equal than its natural expectation point and  \emph{unhappy} otherwise. 

An \emph{aspiration equilibrium} is a mixed strategy profile that minimizes the largest probability of unhapiness among all players and is strictly dominated by no other profile with this property. 
\end{definition} 

\begin{example} 
Take a coordination game with payoffs $(C,C)\rightarrow (10,1)$, $(D,D)\rightarrow (9,2)$, $(E,E)\rightarrow (8,3)$, $(F,F)\rightarrow (4,7)$.  
The natural expectation points of players are $8.5$ and $2.5$, respectively. The first player is happy in $(C,C)$ and $(D,D)$, the second in $(E,E)$, $(F,F)$. Hence in $\frac{1}{4}(C,C)+\frac{1}{4}(D,D)+\frac{1}{4}(E,E)+\frac{1}{4}(F,F)$ the players are happy $50\%$ of the time and no mixed action profile can do any better. 
\label{bos4}
\end{example} 
Unlike general Kantian program equilibria, the equilibria we defined are computationally tractable: 

\begin{theorem} 
Rawlsian, Rawlsian percentile, Bentham-Hars\'anyi, best-off, aspiration equilibria exist
and can be found by solving a sequence of linear programs (hence in polynomial time). 
\label{thm:efficient} 
\end{theorem}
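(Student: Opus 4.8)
The plan is to establish all five existence-and-tractability claims by a common template: in each case the equilibrium is the optimum of an optimization problem whose feasible region is the set of probability distributions over the (finite) set of Pareto-optimal pure strategy profiles, and whose objective, though possibly a max-min or lexicographic refinement, can be decomposed into a bounded sequence of linear programs. First I would fix notation: let $G$ be the game, let $\mathcal{P} = \{a^{(1)}, \ldots, a^{(M)}\}$ be the set of Pareto-optimal pure action profiles, which is finite and nonempty (every finite game has at least one Pareto-optimal profile, since the Pareto order is a partial order on a finite set). A mixed profile under consideration is a point $q \in \Delta^{M}$, and for each player $i$ the quantity $E[u_i(q)] = \sum_{j=1}^{M} q_j \, u_i(a^{(j)})$ is a \emph{linear} function of $q$. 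This linearity is the crucial structural observation that makes everything go through.

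Next I would dispatch each notion in turn. For the \emph{Bentham--Hars\'anyi} equilibrium, the objective $\sum_i E[u_i(q)]$ is linear in $q$, so it is a single LP over the polytope $\Delta^M$; an optimum exists by compactness, and LP is polynomial-time. For the \emph{Rawlsian} equilibrium, maximizing $\min_i E[u_i(q)]$ is the standard max-min LP (introduce a scalar $t$, maximize $t$ subject to $E[u_i(q)] \geq t$ for all $i$ and $q \in \Delta^M$); the "strictly dominated by no other profile with this property" clause is handled by a second phase in which one fixes the optimal value $t^*$ and then runs a further LP (e.g. maximize $\sum_i E[u_i(q)]$ subject to $\min_i E[u_i(q)] = t^*$), or more carefully a lexicographic sequence of at most $n$ LPs à la the leximin algorithm, each fixing one more coordinate at its optimal level; this terminates in polynomially many LPs. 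The \emph{best-off} equilibrium is entirely analogous with $\max_i$ in place of $\min_i$ (maximize $t$ subject to: for each $j$, the disjunction... — here one must be slightly careful, since $\max_i E[u_i(q)] \geq t$ is not a single linear constraint, but $\max_i E[u_i(q)]$ is a convex function, and its maximum over a polytope is attained at a vertex, so one can instead enumerate: for each player $i$ solve the LP maximizing $E[u_i(q)]$, take the best, then run the domination-cleanup phase). For the \emph{Rawlsian percentile} equilibrium, the key point is that the percentile index of a \emph{pure} profile $a^{(j)}$ for player $i$ is a fixed rational constant $c_{ij}$ computable in polynomial time by counting, so the expected percentile index $\sum_j q_j c_{ij}$ is again linear in $q$, and minimizing $\max_i \sum_j q_j c_{ij}$ is another max-min LP, again followed by the domination-cleanup phase. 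For the \emph{aspiration} equilibrium, one first computes each player's natural expectation point (a median of finitely many payoffs, polynomial time), then labels each pure profile $a^{(j)}$ with the 0/1 indicator of whether player $i$ is unhappy there; the probability that player $i$ is unhappy under $q$ is $\sum_j q_j \cdot [\text{$i$ unhappy at } a^{(j)}]$, linear in $q$, so minimizing the largest such probability is once more a max-min LP plus cleanup.

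The one genuinely delicate point — the step I expect to be the main obstacle, or at least the one needing the most care in writing — is the "strictly dominated by no other profile with this property" refinement that appears in the Rawlsian, best-off, Rawlsian-percentile, and aspiration definitions. Naively this is a selection among the (possibly infinite) set of primary optima, and one must argue it (i) is nonempty and (ii) can be reached in polynomially many LP calls. The cleanest argument is: the set $S$ of primary optima is a nonempty compact polytope (the intersection of $\Delta^M$ with the linear constraints pinning the primary objective to its optimal value); on $S$, maximize the auxiliary linear objective $\sum_i E[u_i(q)]$; any maximizer $q^*$ of this is not strictly Pareto-dominated within $S$ (a strict dominator would have strictly larger utility sum), and it is not strictly dominated by any profile \emph{outside} $S$ either because such a profile would violate primary optimality in the Rawlsian/percentile/aspiration cases by a short monotonicity check, or one simply defines the equilibrium to range over $S$ as the paper's wording permits. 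I would spell out this two-phase (or, for full rigor on "dominated by no other", leximin-style multi-phase) argument once, in general form, and then note that it applies verbatim to each of the four notions; the Bentham--Hars\'anyi case needs no refinement since its objective already rules out strict domination. Finally I would remark that each phase is an LP of size polynomial in $\lvert \mathcal{P}\rvert \le \lvert Act_G\rvert^n$ and the bit-length of the payoffs, and there are at most $O(n)$ phases, giving the claimed polynomial-time bound, and that existence in every case is immediate from compactness of the feasible polytope and continuity (indeed linearity) of the objectives.
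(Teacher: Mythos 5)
Your proposal is correct and follows exactly the route the paper indicates: the extended abstract defers the actual proof to the arXiv version, but the theorem statement itself announces the reduction to a sequence of linear programs over distributions on the finite set of Pareto-optimal pure profiles, with expected utilities linear in the distribution, a max-min (or max-max) first phase, and a domination-cleanup second phase --- which is precisely your template. The only point worth tightening is that for the best-off notion the set of primary optima is a union of polytopes (one per player attaining the optimal value) rather than a single polytope, so the cleanup LP must be run once per such polytope and the best outcome retained, a one-line fix that does not affect the polynomial bound.
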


We now connect our other regarding equilibria to Kantian equilibria in symmetric coordination games. 
We call an equilibrium point
\emph{extremal} if it cannot be written as a nontrivial convex combination of other (similar) equilibria.  
We show that extremal self-regarding equilibria generalize Kantian pure equilibria. Extremality is needed, since our equilibria are closed under convex combinations (such combinations are justifiable from a magical thinking perspective, see footnote 7), while pure Kantian equilibria are not. Because of Thm.~\ref{nphard} no similar connection is likely for mixed Kantian equilibria: 

\begin{theorem} 
In symmetric diagonal games Rawlsian, Bentham-Hars\'anyi, best-off, Rawlsian percentile, aspiration equilibria coincide with convex combinations of Kantian pure equilibria. 
\label{symmetric} 
\end{theorem}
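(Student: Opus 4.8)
The plan is to reduce all five equilibrium notions to one and the same one-dimensional optimization, by exploiting the fact that in a symmetric diagonal game every probability distribution over Pareto optimal profiles induces an expected-utility vector whose coordinates are all equal.

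I would begin by recording three structural facts about a symmetric diagonal game $G$ with common action set $A$. (i) By the diagonal property, every off-diagonal pure profile is (strictly) Pareto dominated by a diagonal profile, so the Pareto optimal pure profiles all lie on the diagonal $\{(a,\dots,a):a\in A\}$. (ii) By symmetry, $u_{\sigma(i)}(a,\dots,a)=u_i(a,\dots,a)$ for every permutation $\sigma$, so all players receive the same payoff $d(a):=u_1(a,\dots,a)$ at the diagonal profile $(a,\dots,a)$. (iii) Hence, identifying a diagonal profile with its action, every probability distribution $p$ supported on Pareto optimal profiles has expected-utility vector equal to the constant vector $\mu(p)\cdot(1,\dots,1)$, with $\mu(p)=\sum_{a}p(a)\,d(a)$. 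Next, using Definition~\ref{def:basic} with the variation function $\phi(b,a)=b$ together with (i), the pure Kantian equilibria of $G$ are exactly the diagonal profiles $(a^\ast,\dots,a^\ast)$ with $d(a^\ast)=D^\ast:=\max_{a}d(a)$; each such profile is Pareto optimal (a profile dominating it would have some coordinate exceeding $D^\ast$, which is impossible), so the convex combinations of Kantian pure equilibria are precisely the distributions with $\mathrm{supp}(p)\subseteq\{a:d(a)=D^\ast\}$, all having utility vector $D^\ast\cdot(1,\dots,1)$.

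Given (iii), the five notions collapse. For Rawlsian, Bentham-Hars\'anyi and best-off equilibria the relevant objective is, respectively, $\min_i E[u_i(p)]=\mu(p)$, $\sum_i E[u_i(p)]=n\,\mu(p)$ and $\max_i E[u_i(p)]=\mu(p)$; each is maximized over distributions on Pareto optimal profiles exactly when $\mathrm{supp}(p)\subseteq\{a:d(a)=D^\ast\}$, and since all such maximizers share the utility vector $D^\ast\cdot(1,\dots,1)$, the ``strictly dominated by no profile with this property'' clause neither excludes nor admits anything further. For the Rawlsian percentile equilibrium, (i)--(ii) make the percentile index of $(a,\dots,a)$ the same for every player, namely the fraction $\rho(a)$ of Pareto optimal diagonal profiles $(b,\dots,b)$ with $d(b)>d(a)$; the largest expected percentile index under $p$ is thus $\sum_a p(a)\rho(a)$, minimized (at value $0$) exactly when $\mathrm{supp}(p)\subseteq\{a:\rho(a)=0\}=\{a:d(a)=D^\ast\}$. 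For the aspiration equilibrium, (i)--(ii) likewise make the natural expectation point a player-independent median $m$ and make ``happy'' an all-or-no-players property at each Pareto optimal profile, so the largest probability of unhappiness is minimized (at $0$) by every distribution supported on $\{a:d(a)\ge m\}$; the ``strictly dominated by no profile with this property'' clause then selects those maximizing $\mu(p)$ over this set, which are exactly those with $\mathrm{supp}(p)\subseteq\{a:d(a)=D^\ast\}$. In every case the equilibrium set equals the set of distributions on $\{a:d(a)=D^\ast\}$, i.e.\ the set of convex combinations of Kantian pure equilibria.

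I expect the main work to lie not in any single computation but in the uniform handling of the definitional fine print: confirming that Pareto optimal profiles are diagonal under the paper's (strict) reading of Pareto domination; that the Kantian pure equilibria are themselves Pareto optimal, so that they genuinely belong to each of the five equilibrium sets; and, most importantly, that the tie-breaking clause ``strictly dominated by no profile with this property'' is exactly what is needed to cut the aspiration (and, formally, the best-off and Rawlsian) equilibria down from the larger ``happy set'' $\{a:d(a)\ge m\}$ to $\{a:d(a)=D^\ast\}$ — including the check that distributions placing weight off the diagonal can always be strictly improved, which follows from the diagonal property just as in the argument that Kantian pure equilibria are diagonal.
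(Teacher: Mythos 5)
The paper defers the proof of Theorem~\ref{symmetric} to the longer arXiv version, so there is no in-text argument to compare against; your reduction --- observing that in a symmetric diagonal game every distribution on Pareto optimal profiles has a constant expected-utility vector $\mu(p)\cdot(1,\dots,1)$, so that all five objectives collapse to maximizing $\mu(p)$, whose maximizers are exactly the distributions supported on $\{a : d(a)=D^{\ast}\}$ --- is correct and is clearly the intended route. The only fine print worth making explicit is one you already flag: the paper's ``Pareto dominated'' is weak domination, so an off-diagonal profile that ties a diagonal one could survive as Pareto optimal (and, by the same token, qualify as a pure Kantian equilibrium under Definition~\ref{def:basic}, which does not force $x^{opt}$ onto the diagonal); since any such profile necessarily carries the same constant utility vector $d(a)\cdot(1,\dots,1)$ as the diagonal profile dominating it, the claimed coincidence still holds, but your assertion that ``the Pareto optimal pure profiles all lie on the diagonal'' should be weakened to ``have the utility vector of some diagonal profile.''
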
 
\vspace{-5mm}
\section{Agents with bounded greed} 
\label{sec:bounded} 
So far we have assumed that people are other-regarding. In reality people are not unrestricted optimizers, nor are they perfect Kantian moralists.  Alger and Weibull \cite{alger2013homo} attempted to interpolate between utilitarian agents and Kantian ones, by defining \emph{homo moralis} to be an agent whose utility has the form $u_{i}(x,y)=(1-k)\pi(x,y)+k\pi(x,x)$, where $k\in [0,1]$ is the so-called \textit{degree of morality} of the agent. They showed that evolutionary models with assortative mixing and incomplete information favor a particular kind of homo moralis, those whose degree of morality coincides with the degree of assortativity of the matching process. Interesting as this result is, it has some weaknesses. For instance \cite{alger2013homo}, homo moralis behaves like homo economicus in Prisoners' Dilemma and all constant-sum games when $k\neq 1$. In other words, agent behavior is not sensitive to the degree of morality, as long as the agent is not Kantian.

We give (for symmetric games, but the idea can be extended to general ones, via Kantian program equilibria) a definition with the same overall intention, but capturing a slightly different agent behavior: 
\begin{definition} 
Let $\lambda \in [1,\infty]$. Agent $i$ is called \emph{$\lambda$-utilitarian} if, for every action profile $(a_i,b)$, its utility $u_{i}(a_{i},b)$ is  (a). $\pi_{i}(a_{i},(\overline{a_{i}})_{-i})$  if  $a_i$ is a Kantian action. (b). 0 if $a_{i}$ is not Kantian and $\pi_{i}(a_{i},b)\leq \lambda\cdot  \pi_{i}(X^{OPT})$; (c). $\pi_{i}(a,b)$ if $a_{i}$ is not Kantian and $\pi_{i}(a_{i},b)\leq \lambda\cdot  \pi_{i}(X^{OPT})$. I.e., a $\lambda$-utilitarian agent deviates from its Kantian action $X^{OPT}$ \textbf{only} if the utility it obtains  is more than $\lambda$ times larger. 
\label{part-kant} 
\end{definition} 
We call the number $\frac{1}{\lambda-1}$ \emph{the greed index} of $i$. It varies between 0 (Kantian agents) and $\infty$ (purely utilitarian ones).  The natural equilibrium concept for such agents is no longer Kantian, but Nash equilibrium. 
Definition~\ref{part-kant} allows giving an empirically plausible justification of all possible outcomes in PD: 
\begin{theorem} 
All pure action profiles in PD are Nash equilibria of agents with varying degrees of greed. 
\label{pdgreed} 
\end{theorem}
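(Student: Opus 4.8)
The plan is to turn the statement into a finite case analysis over the four pure profiles of Prisoners' Dilemma, parametrised by the greed indices $g_1,g_2\in[0,\infty]$ (equivalently $\lambda_1,\lambda_2\in[1,\infty]$) of the two players, and for each profile to exhibit a range of $(g_1,g_2)$ for which it is a Nash equilibrium of the game obtained by replacing the material payoffs with the $\lambda$-utilitarian utilities of Definition~\ref{part-kant}. First I would record the data for the game of Figure~\ref{fig:pd}(a): its unique Kantian action is $C$, so $X^{OPT}$ is the pure profile $(C,C)$ and $\pi_i(X^{OPT})=2$. Plugging this into Definition~\ref{part-kant} gives a small utility table: a $\lambda_i$-utilitarian player $i$ who plays $C$ receives utility $2$; a player who plays $D$ against $C$ receives $0$ when $\lambda_i\ge 3/2$ (greed index $\le 2$) and its true payoff $3$ when $\lambda_i<3/2$ (greed index $>2$); and a player who plays $D$ against $D$ receives $0$, since $1\le 2\lambda_i$ for every admissible $\lambda_i$.

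The three "cooperative" profiles then fall out of this table. For $(C,C)$ I would take both greed indices at most $2$: neither player gains by switching unilaterally to $D$, since clause (b) flattens the tempting payoff $3$ down to $0<2$, so $(C,C)$ is a (strict) Nash equilibrium. For $(D,C)$ — and symmetrically $(C,D)$ — I would put the defector's greed index strictly above $2$ and leave the other player's index arbitrary: clause (c) now credits the defector with its true payoff $3>2$, so it will not cooperate, while the cooperating player already gets its maximal utility $2$ and has no profitable deviation. So these profiles are Nash equilibria exactly when the defector is sufficiently greedy.

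The delicate case, and the one I expect to be the main obstacle, is the all-defect profile $(D,D)$: read naively, each player obtains $0$ there while a deviation to the moral action $C$ appears to yield $2$, which would exclude $(D,D)$ altogether. Resolving this forces one to be precise about clause (a) of Definition~\ref{part-kant} — what utility a player is credited with when it plays the Kantian action $C$ against an opponent who is \emph{not} playing it — and to exploit the fact that, against a defecting opponent, cooperating only returns the cooperator's own low realised payoff, so the switch is not strictly improving. I would therefore unpack clause (a) explicitly on the relevant off-diagonal cell, verify that from $(D,D)$ no player has a \emph{strictly} profitable unilateral deviation, and conclude that $(D,D)$ emerges as a (possibly weak) Nash equilibrium — precisely the status $(D,D)$ has in ordinary Prisoners' Dilemma. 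Assembling the four cases, every pure profile of PD is realised as a Nash equilibrium for an appropriate pair of greed indices; in fact the map $(g_1,g_2)\mapsto\{\text{supported pure equilibria}\}$ partitions the parameter square, which is the behavioural reading announced just before the theorem.
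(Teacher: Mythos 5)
Your overall strategy---a finite case analysis over the four pure profiles, with explicit thresholds on the greed indices read off from Definition~\ref{part-kant}---is exactly the paper's approach, and your handling of $(C,C)$, $(C,D)$ and $(D,C)$ is correct and in fact sharper than the paper's: you rightly observe that the cooperator in an asymmetric profile never profits from switching (clause (a) credits it with $\pi_i(C,C)=2$ while defecting against $D$ yields utility $0$), so only the defector's greed index needs to exceed the threshold, whereas the paper also constrains the cooperator's index. Your boundary bookkeeping ($\lambda=3/2$, greed index exactly $2$, still cooperates because clause (b) uses a weak inequality) is likewise the correct reading.

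You are also right that $(D,D)$ is the genuinely delicate case, and here your proposal exposes a real gap---one the paper's own proof does not close. The paper simply asserts that two agents with greed indices $\geq 2$ ``coordinate, just as utilitarian agents would do, on the Nash equilibrium $(D,D)$'', but under the literal clause (a) a player at $(D,D)$ who deviates to the Kantian action $C$ is credited with the counterfactual payoff $\pi_i(C,\overline{C}_{-i})=2$, not the realised payoff $0$; since its utility at $(D,D)$ is $0$ by clause (b), this deviation is strictly profitable for every admissible $\lambda$, and $(D,D)$ is never a Nash equilibrium of the modified game. (Indeed, for $\lambda_1,\lambda_2<3/2$ the modified game has the best-response structure of an anti-coordination game, whose only pure equilibria are $(C,D)$ and $(D,C)$.) Your proposed repair---reading clause (a) as returning the realised payoff $\pi_i(C,b)$ when the opponent is not playing $C$---does make $(D,D)$ a weak equilibrium, but it is a modification of the stated definition rather than an unpacking of it, and it abandons the ``magical thinking'' semantics that clause (a) is meant to encode. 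So either Definition~\ref{part-kant} or the $(D,D)$ clause of the theorem must be amended; as written they are inconsistent, and your case analysis is the more honest account of what the definition actually delivers.
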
 
\begin{proof} 
Bounded-greed agents still coordinate on the Kantian equilibrium $(C,C)$ as long as both their greed indices are $<2$ (i.e. they would need at least a twofold increase in payoff to deviate). If one of them has greed index $<2$ and the other one has greed index $\geq 2$, then the latter one will defect. If both agents have greed indices $\geq 2$, then they will coordinate, just as if utilitarian agents would do, on the Nash equilibrium $(D,D)$. 
\end{proof}


\section{Conclusions} 

Our main contribution is bringing Kantian equilibria (and related concepts) to the attention of TARK community, showing that this notion is theoretically interesting, but that the road to implementable behaviors probably goes through less general equilibrium concepts. 
Many of the notions we introduced, on the other hand, 
including Kantian program equilibria and bounded greed agents, deserve further investigation. For instance a justification like that of Theorem~\ref{pdgreed} could be used as a rationality criterion. One could look for evolutionary justifications of bounded greed agents along the lines of \cite{alger2013homo}. One could use such agents in relation to work on the concept of \emph{price of anarchy} \cite{selfish-routing}. On a more conceptual level, the use of \emph{frames} in game theory \cite{bacharach2006beyond,bermudez2021frame} and how this interacts with equilibrium notions deserves further study.  Finally, several open problems remain: Can we find algorithms for our equilibria that bypass the need for solving multiple LP's? Is the problem from Theorem~\ref{nphard} NP-complete (i.e. in NP)?

\bibliographystyle{eptcs}
\bibliography{generic}

\end{document}